\newtheorem{theorem}{Theorem}[section]
\newtheorem{lemma}[theorem]{Lemma}
\newtheorem{corollary}[theorem]{Corollary}
\newtheorem{example}[theorem]{Example}
\begin{document}
\title[Phase Transition in Random Polymers]
{Existence of Scaling Limit Phase Transition in a Two-dimensional Random Polymer Model
}
\author{Luis R. Lucinger}
\address{
\newline
Departamento de Matem\'atica -
Universidade de Bras\'{i}lia, Brazil,
Email: \textup{\tt lucinger@mat.unb.br}
}
%
%
\author{R. Vila$^\dag$}
\address{
\newline 
Departamento de Estat\'istica -
Universidade de Bras\'ilia, Brazil,
Email: \textup{\tt rovig161@gmail.com}
}

\date{\today}

\subjclass[2010]{MSC 60F17, MSC 60K35, MSC 60G50, MSC 82D60.}

\keywords{Random polymer; Positive Association.
\\	
$^\dag$ \  Corresponding author.}

\begin{abstract}
In this paper we prove a scaling limit phase transition for a class of two-dimensional random polymers.
\end{abstract}

\maketitle
\vspace*{-0,2cm}
\section{Introduction}\label{Introduction}
In this paper we consider two-dimensional random polymers, 
which are defined as follows. 
Given $N\in\mathbb{N}$, a $N$-th step polymer $\mathcal{S}$
is an element of $\mathbb{W}_N$ given by  
\[
\mathbb{W}_N\coloneqq \big\{
\mathcal{S}=(\mathcal{S}_0,\mathcal{S}_1,\dots, \mathcal{S}_N) : \, 
\mathcal{S}_i\in \mathbb{Z}^2, \, \mathcal{S}_0=0 \ 
\text{and}\ \|\mathcal{S}_{i+1}-\mathcal{S}_i\|_1=1\big\},
\]
where $\|\cdot\|_1$ denotes the taxicab norm (or Manhattan distance). 
Its probability distribution is defined by a Gibbs measure,
at the inverse temperature $\beta>0$, given by
\begin{align}\label{model}
	\mathbb{P}_{N}^{\beta}(\{\mathcal{S}\})
	=
	{1\over \mathcal{Z}_N(\beta)} \exp\big(-\beta\mathcal{H}_N(\mathcal{S})\big),
\end{align}
where
$
\mathcal{Z}_N(\beta)=\sum_{\mathcal{S}} \exp\big(-\beta\mathcal{H}_N(\mathcal{S})\big)
$
is a normalization factor and the Hamiltonian is
\begin{align}\label{hamiltonian}
	\mathcal{H}_N(\mathcal{S})
	=
	{\sum_{1\leqslant i<j\leqslant N}V_{ij}\cdot \langle \mathcal{X}_i,\mathcal{X}_j\rangle},
\end{align}
with $\mathcal{X}_i\coloneqq\mathcal{S}_{i}-\mathcal{S}_{i-1}$,
$V_{ij}\geqslant 0$ 
and $\langle\cdot,\cdot\rangle$ the usual inner product.
Here we assume that the interaction $V_{ij}$ depends only on the distance between 
$i, j\in\mathbb{N}$, i.e., $V_{ij} = V(|i-j|)$ and satisfies the following regularity condition
\[
\sum_{i\in\mathbb{N}}V(i)<+\infty.
\]
\quad 
Such random polymers have received considerable attention in the literature 
bringing together physics, chemistry, and more recently biophysics.
In Caracciolo et al. \cite{MR1301458}, the authors considered this type of a random polymer model, where the two body interactions decay as a power-law.
Their model interpolates between the lattice Edwards model and an ordinary simple random walk (SRW).
Under the assumption of $1 \leqslant d \leqslant 4$, the authors proved
the existence of an exponent $\gamma=\gamma(d,\alpha)$, for the end-to-end distance, such that $\sum_{\mathcal{S}}\|\mathcal{S}_N\|^2\mathbb{P}^\beta_N(\mathcal{S})\sim cN^{2\gamma}$ 
holds.
In Butt\`a et al. \cite{MR2151217}, by considering an appropriate scale, the authors showed that
the end-to-end distance of a two-dimensional random polymer with
a self repelling interaction of Kac-type undergoes a diffusive-ballistic phase transition. 
In Procacci et al. \cite{MR2379704}, the authors considered a self repelling model
with the Hamiltonian given by 
$
\sum_{1\leqslant i<j\leqslant N}\widetilde{V}_{ij}
\cdot \langle \mathcal{S}_i-\mathcal{S}_j,\mathcal{S}_i-\mathcal{S}_j\rangle,
$
where the
interactions $\widetilde{V}_{ij}=1/|i-j|^\alpha$, with $3<\alpha\leqslant 4$.
They proved that their model is diffusive at sufficiently high temperatures and ballistic at sufficiently  low temperatures.
Random polymer models with some similarity to the model
proposed in Procacci et al. \cite{MR2379704} have been studied in 
Bouchaud et al. \cite{MR3456987};
Caracciolo et al.  \cite{MR1301458};
van der Hofstad et al. \cite{MR1633582}; and
Marinari and Parisi \cite{MR1234567}.
Recently, Cioletti et al. \cite{MR3226841} 
considered a similar random polymer model, where the Hamiltonian is given by
\eqref{hamiltonian} and interactions are of the form $V_{ij}=1/|i-j|^\alpha$, $1<\alpha\leqslant 2$. 
The authors proved the existence of a ballistic-diffusive phase transition in terms of the inverse temperature $\beta$, by comparing their long-range polymer model with two 
independent copies of a long-range one-dimensional ferromagnetic Ising model. 
In addition, they obtained a Central Limit Theorem (CLT) 
for a similar model with drift.

Let us give an informal explanation 
why we called the polymer models in \cite{MR3226841} and \cite{MR2379704} similar, 
although the mentioned phase transitions, require $\alpha$ to be in different intervals. 
The idea is to look 
at these random polymer models as spin models in the lattice $\mathbb{N}$.
In \cite{MR3226841} the spin variables are $(\mathcal{X}_i)_{i\in\mathbb{N}}$, and
in \cite{MR2379704} the spin variables are $(\mathcal{S}_i)_{i\in\mathbb{N}}$.
Rewriting the model in \cite{MR2379704}  by using    
spin variables $(\mathcal{X}_i)_{i\in\mathbb{N}}$,
we get for configurations which are ``small'' 
perturbations of a ground state that
\begin{align*}
	\sum_{1\leqslant i<j\leqslant N}
	\widetilde{V}_{ij}
	\cdot \langle \mathcal{S}_i-\mathcal{S}_j,\mathcal{S}_i-\mathcal{S}_j\rangle
	&=
	\sum_{1\leqslant i<j\leqslant N}
	\widetilde{V}_{ij}
	\sum_{k=i+1}^{j} \sum_{l=i+1}^{j}\langle \mathcal{X}_k,\mathcal{X}_l \rangle	
	\\[0.2cm]
	&\sim 
	\sum_{1\leqslant i<j\leqslant N}
	\!\!\!\widetilde{V}_{ij} \, |i-j|^2  
	\cdot \langle \mathcal{X}_i,\mathcal{X}_j\rangle,
\end{align*}
which explains why $\alpha\in (1,2]$ in \cite{MR3226841} and 
$\alpha\in (3,4]$ in \cite{MR2379704}.

In the present paper we prove a phase transition 
in terms of suitable scaling limits.
We consider a class of power-law decaying interactions
as in \cite{MR3226841}, 
and we prove that the critical inverse temperature where the 
polymer model behavior changes is equal to $\beta_c$, the critical inverse
temperature  of a related one-dimensional Ising model.
Roughly speaking, we prove the existence of a critical point $\beta_c\in(0,+\infty)$
such that the scaling limit of our random polymer model converges in the Wasserstein  
distance 
(see the beginning of the next section for its definition) 
to the planar standard Brownian 
motion in the subcritical regime 
$\beta<\beta_c$.
Moreover, we also prove that it does not scale to the Brownian motion when $\beta>\beta_c$.
To be more precise, the statements of the main result of this paper are the following.
Let $\mathcal{S}\in {\mathbb W}_N$ be a random polymer
and consider the stochastic process
\begin{align*}
	\mathcal{W}_n(t)\coloneqq
	\frac{1}{\sigma\sqrt{n}}\big\{\mathcal{S}_k+(nt-k)(\mathcal{S}_{k+1}-\mathcal{S}_{k})\big\},
	\quad\frac{k}{n}\leqslant t<\frac{k+1}{n}, \ t\in[0,1].
\end{align*}

\begin{theorem}[Phase transition] \label{Donsker invariance principle}
	Consider the two-dimensional random polymer model defined by the Gibbs measure
	\eqref{model}-\eqref{hamiltonian},
	where the interaction $V_{ij}=|i-j|^{-\alpha}$,
	and $1<\alpha\leqslant 2$ is fixed.
	For each $0<p\leqslant 2$ the following holds:
	\begin{enumerate}
		\item
		if $\beta<\beta_c$, 
		then $\lim\limits_{n\to\infty}d_p(\mathcal{W}_n(t),\boldsymbol{B}_2(t))=0$;
		\item
		if $\beta>\beta_c$, 
		then $\liminf\limits_{n\to\infty}d_p(\mathcal{W}_n(t),\boldsymbol{B}_2(t))>0$;
	\end{enumerate}
	where $\boldsymbol{B}_2(t)=(B_t^1, B_t^2)$ is the 
	planar standard Brownian motion.
\end{theorem}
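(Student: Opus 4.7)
The plan is to reinterpret the polymer as a one-dimensional spin system and exploit the well-known phase transition of one-dimensional long-range Ising models with coupling $V_{ij}=|i-j|^{-\alpha}$, $1<\alpha\leqslant 2$, whose critical inverse temperature $\beta_c$ is finite (Dyson for $1<\alpha<2$, Fr\"ohlich--Spencer for $\alpha=2$). Decomposing each step as $\mathcal{X}_i=\epsilon_i\,e_{d_i}$ with sign $\epsilon_i\in\{-1,+1\}$ and direction $d_i\in\{1,2\}$, one has $\langle\mathcal{X}_i,\mathcal{X}_j\rangle=\epsilon_i\epsilon_j\,\mathbf{1}_{\{d_i=d_j\}}$, so the Hamiltonian in \eqref{hamiltonian} splits into a pair of long-range Ising interactions, one on each level set of the directions. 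Following Cioletti et al.~\cite{MR3226841}, I would dominate the polymer Gibbs measure by a product of two independent one-dimensional long-range Ising models; this comparison is the mechanism through which the polymer's $\beta_c$ inherits the Ising critical point.

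For part (1), positive association (FKG, listed in the keywords) together with the Ising comparison should yield in the subcritical phase $\beta<\beta_c$ the summable covariance estimate $\sup_N\sum_{j>1}|\mathrm{Cov}_N^\beta(\mathcal{X}_1,\mathcal{X}_j)|<\infty$. This is precisely the input of Newman's invariance principle for positively associated arrays: the finite-dimensional distributions of $\mathcal{W}_n$ converge to those of the planar Brownian motion, with $\sigma^2$ chosen to absorb the limiting per-coordinate variance. Tightness in $C([0,1],\mathbb{R}^2)$ and the uniform integrability of $\sup_{t\leqslant 1}\|\mathcal{W}_n(t)\|^p$ required to upgrade weak convergence to $d_p$-convergence for $0<p\leqslant 2$ both follow from a fourth-moment increment bound of the shape $\mathbb{E}_N^\beta[\|\mathcal{S}_k-\mathcal{S}_\ell\|^4]\leqslant C(k-\ell)^2$, derived from the same covariance estimate via the FKG machinery.

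For part (2), the Ising comparison in the supercritical regime transfers spontaneous magnetization to the polymer, giving the ballistic lower bound $\mathbb{E}_N^\beta[\|\mathcal{S}_N\|]\geqslant cN$ established in \cite{MR3226841}. Under the diffusive scaling this forces $\mathbb{E}[\|\mathcal{W}_n(1)\|]\geqslant c'\sqrt n\to\infty$, whereas $\|\boldsymbol{B}_2(1)\|$ has finite $L^p$ norm; no coupling between $\mathcal{W}_n$ and $\boldsymbol{B}_2$ can close this gap at the endpoint $t=1$, so $\liminf_n d_p(\mathcal{W}_n,\boldsymbol{B}_2)$ is bounded below by a positive constant.

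The hardest step should be the covariance and fourth-moment estimates at the borderline exponent $\alpha=2$, where the underlying Ising correlations decay only polynomially and the constants must remain uniform in $N$. The FKG structure and the comparison inequalities from \cite{MR3226841} are what I expect to force these bounds through. Proving tightness and the uniform $L^p$-integrability of the path supremum (rather than just of the endpoint) is the other delicate ingredient, since it is what lifts the classical weak invariance principle to convergence in the Wasserstein metric $d_p$.
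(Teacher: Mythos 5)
Your overall strategy (reduce to a one-dimensional long-range Ising model, use association/CLT machinery below $\beta_c$, use the ballistic lower bound of Cioletti et al.\ above $\beta_c$) is the same as the paper's, and your argument for Item~2 is essentially the proof given there: the paper argues by contradiction, showing $\mathbb{E}\|\mathcal{W}_n(t)\|_1^2\to\infty$ from the ballistic bound of \cite{MR3226841}, while Bickel--Freedman would force $\mathbb{E}\|\mathcal{W}_n(t)\|_1^2\to 2t$ if $d_p\to 0$. The genuine gap is in your reduction to the Ising model. Writing $\mathcal{X}_i=\epsilon_i e_{d_i}$ gives $\langle\mathcal{X}_i,\mathcal{X}_j\rangle=\epsilon_i\epsilon_j\mathbf{1}_{\{d_i=d_j\}}$, but this does \emph{not} split the Hamiltonian into a pair of long-range Ising interactions: the level sets of the direction field are random, the sign and direction variables are coupled through the Boltzmann weight, and the partition function does not factorize. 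A one-sided stochastic domination by a product of Ising measures, even if you could establish it, would only locate the polymer's transition in an interval around the Ising critical point; the theorem asserts that the transition occurs \emph{exactly} at $\beta_c$, and for that an exact identification is needed. The paper (following \cite{MR3226841}) obtains it by rotating through $\pi/4$: each step satisfies $T\mathcal{X}_i=\sigma_i^{(1)}e_1/\sqrt2+\sigma_i^{(2)}e_2/\sqrt2$ with $(\sigma_i^{(1)},\sigma_i^{(2)})\in\{-1,1\}^2$ ranging bijectively over the four admissible steps, whence $\langle\mathcal{X}_i,\mathcal{X}_j\rangle=\tfrac12\big(\sigma_i^{(1)}\sigma_j^{(1)}+\sigma_i^{(2)}\sigma_j^{(2)}\big)$ and the polymer measure is \emph{exactly} the product of two independent one-dimensional long-range Ising measures. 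You should replace your $(\epsilon_i,d_i)$ decomposition by this one; everything downstream (finite susceptibility below $\beta_c$ via Aizenman--Barsky--Fern\'andez, spontaneous magnetization above $\beta_c$) then applies verbatim to the two spin sequences $\{\sigma_i^{(j)}\}$.

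A secondary remark: you are proving more than is asked. The quantity $d_p(\mathcal{W}_n(t),\boldsymbol{B}_2(t))$ in the statement is a Wasserstein distance between $\mathbb{R}^2$-valued random variables at each fixed $t$, so no tightness in $C([0,1],\mathbb{R}^2)$ and no uniform integrability of the path supremum are required; the fourth-moment increment bounds at the borderline $\alpha=2$, which you flag as the hardest step, can be avoided entirely. The paper works at fixed $t$: it establishes asymptotic independence of blocks via the Newman--Wright covariance inequality, applies Berry--Esseen to the blocks to get weak convergence together with convergence of second moments (hence $d_2$-convergence by Bickel--Freedman), and finishes with Lyapunov's inequality for $0<p<2$. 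If you do intend a functional version, that is a strictly stronger statement than the one being proved and you would need to supply the moment estimates you only sketch.
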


We emphasize that the convergence obtained in Theorem \ref{Donsker invariance principle} is stronger than the convergence in distribution. Moreover, by a result of Bickel and Freedman \cite{MR630103}, it also implies convergence of moments.

This paper is organized as follows.
In Section \ref{sec-war} we introduce the Wasserstein distances and recall some of 
their basic results.
Later, in Section \ref{sec-conv}, we prove an abstract theorem about the scaling limits of positively correlated random variables.
We first consider one-dimensional processes and provide an application on the long-range Ising Model.
Next we consider $m$-dimensional processes and use this version to prove our main result in the last section.

\section{Wasserstein Distance}\label{sec-war}

The Wasserstein distance \cite{MR0314115} is
also known as Monge-Kantorovich-Rubinstein distance \cite{MR0102006}, 
Mallows distance \cite{MR0298812} or
optimal transport distance in optimization \cite{MR2011032}. 
It is, among other things, a useful tool in order to derive CLT type results including
the case of heavy-tailed stable distributions 
(see Johnson and Samworth\cite{MR2172843}; and Dorea and Oliveira \cite{MR3185561}).
To define it, let $(\mathcal{X},d)$ be a complete metric space 
and let $\mathcal{P}(\mathcal{X})$ be the set of all probability measures $\mu$ on the 
Borel $\sigma$-field of $\mathcal{X}$. 
The Wasserstein distance of order $p>0$ between two
probability measures $\mu_1, \mu_2\in \mathcal{P}(\mathcal{X})$ is defined as	
\[
d_{p}(\mu_1,\mu_2)
=
\left\{\inf_{\nu\in \Pi(\mu_1,\mu_2) }
\int_{\mathcal{X}\times \mathcal{X}}d^p(x,y)\, \nu(dx,dy)  \right\}^{1/p},
\]	
where $\Pi(\mu_1,\mu_2)$ is the set of all Borel probability measures on 
$\mathcal{X}\times \mathcal{X}$ with
marginals $\mu_1$ and $\mu_2$, respectively.

Note that in the case where $\mathcal{X}=\mathcal{B}$ is a finite-dimensional Euclidean space, with its standard norm $\|\cdot\|$,
the Wasserstein distance of order $p>0$ between Borel probability measures $\mu_1$ and $\mu_2$ 
is alternatively given by
\begin{equation}\label{Mallowsdistance}
	d_{p}(\mu_1,\mu_2)
	=
	\inf_{(X,Y)}
	\left\{\mathbb{E}\|X-Y\|^{p}\right\}^{1/{p}},
\end{equation}
where the infimum is taken over all
$\mathcal{B}$-valued random variables (r.v.'s)  $X$ and $Y$, where
$X$ has law $\mu_1$ and $Y$ has law $\mu_2$.
When convenient, we write $d_p(X, Y)$ instead of $d_p(\mu_1,\mu_2)$ and
we shall remark that this particular case is enough for the purpose of this work.

For $p\geqslant 1$, the Wasserstein distance defines a metric on a subspace of $\mathcal{P}(\mathcal{B})$ 
and bears a close connection with weak convergence. Let $\Gamma_{p}(\mathcal{B})$ be the set
of all probability measures $\mu\in\mathcal{P}(\mathcal{B})$ such that
$\int_{\mathcal{B}} \|x\|^{p}\,  d\mu(x)<+\infty$.

\begin{theorem}[Bickel and Freedman  \cite{MR630103}]
	\label{BickelFriedmantheorem}
	Let $p\geqslant 1$, and $\mu$ and $\{\mu_n\}_{n\geq 1}$ in 
	$\Gamma_{p}(\mathcal{B})$. Then, $\lim_{n\to\infty}d_{p}(\mu_n,\mu)=0$ if 
	and only if, for every bounded continuous function $g:\mathcal{B}\to \mathbb{R}$, we have,
	\[
	\lim_{n\to\infty}
	\displaystyle \int_{\mathcal{B}} g(x)\, d\mu_n(x)
	=
	\int_{\mathcal{B}}g(x)\, d\mu(x) 
	\quad\mbox{and}\quad
	\lim_{n\to\infty}
	\displaystyle\int_{\mathcal{B}}\|x\|^{p}\, d\mu_n(x)
	=
	\int_{\mathcal{B}}\|x\|^{p}\, d\mu(x).
	\]
\end{theorem}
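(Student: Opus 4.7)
The plan is to prove the two implications separately, with the converse requiring more care.

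For the forward direction, I would work with the coupling formulation \eqref{Mallowsdistance}. For each $n$, fix a near-optimal coupling $(X_n, Y_n)$ with $X_n \sim \mu_n$, $Y_n \sim \mu$, and $\mathbb{E}\|X_n - Y_n\|^p \leqslant d_p(\mu_n,\mu)^p + 1/n$. Then $\|X_n - Y_n\| \to 0$ in $L^p$, and in particular in probability. For any bounded continuous $g:\mathcal{B}\to\mathbb{R}$, the quantity $g(X_n) - g(Y_n)$ tends to zero in probability and is uniformly bounded, so dominated convergence gives $\mathbb{E}[g(X_n)] - \mathbb{E}[g(Y_n)] \to 0$, which is exactly $\int g\, d\mu_n \to \int g\, d\mu$. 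Convergence of $p$-th moments then follows from the reverse triangle inequality for the metric $d_p$ (valid since $p\geqslant 1$) applied to $\mu_n$, $\mu$, and the Dirac mass $\delta_0$: since $d_p(\nu, \delta_0) = \bigl(\int \|x\|^p d\nu\bigr)^{1/p}$, one obtains $\bigl|\,(\int \|x\|^p d\mu_n)^{1/p} - (\int \|x\|^p d\mu)^{1/p}\bigr| \leqslant d_p(\mu_n, \mu) \to 0$.

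For the converse, assume weak convergence together with convergence of the absolute $p$-th moments. Since $\mathcal{B}$ is a separable metric space, Skorohod's representation theorem yields random variables $X_n \sim \mu_n$ and $X \sim \mu$ on a common probability space such that $X_n \to X$ almost surely, whence $\|X_n - X\|^p \to 0$ almost surely. If one can upgrade this to $L^1$-convergence, then $d_p(\mu_n, \mu) \leqslant (\mathbb{E}\|X_n-X\|^p)^{1/p} \to 0$ and the proof is complete.

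The main obstacle is establishing uniform integrability of $\{\|X_n - X\|^p\}_n$, which is what allows Vitali's theorem to pass from almost sure convergence to $L^1$-convergence. By the elementary bound $\|X_n - X\|^p \leqslant 2^{p-1}(\|X_n\|^p + \|X\|^p)$, the problem reduces to uniform integrability of $\{\|X_n\|^p\}_n$. This is the classical fact that weak convergence combined with convergence of absolute $p$-th moments implies uniform integrability, which I would prove via a truncation argument: for the bounded continuous function $\psi_R(x) = \|x\|^p \wedge R$, write $\int (\|x\|^p - R)^+ d\mu_n = \int \|x\|^p d\mu_n - \int \psi_R\, d\mu_n$; weak convergence controls the limit of the second term, the hypothesis controls the limit of the first, and both limits coincide as $R\to\infty$, forcing the tails to vanish uniformly in $n$. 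Combining this with the almost sure convergence from Skorohod via Vitali's theorem yields $\mathbb{E}\|X_n - X\|^p \to 0$, and hence $d_p(\mu_n, \mu) \to 0$.
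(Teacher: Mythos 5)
The paper offers no proof of this statement: it is quoted verbatim from Bickel and Freedman \cite{MR630103} (their Lemma~8.3) and used as a black box, so there is nothing internal to compare your argument against. Your proposal is a correct rendition of the standard proof of that lemma, and it is essentially the argument Bickel and Freedman themselves give: near-optimal couplings for the forward implication, the triangle inequality against $\delta_0$ for the moments, and Skorohod representation plus uniform integrability (via truncation at level $R$) for the converse. One small point worth a sentence in a written version: in the forward direction you pass from $\|X_n-Y_n\|\to 0$ in probability to $g(X_n)-g(Y_n)\to 0$ in probability, which is not immediate for a merely continuous (non-uniformly-continuous) bounded $g$; it follows either from the converging-together (Slutsky) lemma, since $Y_n\sim\mu$ for all $n$, or from tightness of $\{Y_n\}$ together with uniform continuity of $g$ on compacta --- both available here because $\mathcal{B}$ is finite-dimensional. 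With that caveat addressed, the proof is complete.
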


Assume $X\stackrel{d}{=}F$, $Y\stackrel{d}{=}G$ and $(X,Y)\stackrel{d}{=}H$, 
where $H(x,y)=\min\{F(x), G(y)\}$.
Then the following representation result, known as the representation Theorem, whose proof
can be found in Dorea and Ferreira \cite{MR2863804},
will be helpful to evaluate $d_p (F,G)$ when $\mathcal{B}$ is the real line.
\begin{theorem}
	\label{teorema de representacao Mallows}
	For $p\geqslant 1$ we have
	\begin{align*}
		d_p(F,G)
		&=
		\left\{ \int_{\mathbb{R}^2} |x-y|^p dH(x,y)\right\}^{1/p}
		\eqqcolon \
		\{\mathbb{E}_H|X-Y|^p\}^{1/p}.
	\end{align*}
\end{theorem}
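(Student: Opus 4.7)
The plan is to establish the two inequalities $d_p(F,G)\leq \{\ee_H|X-Y|^p\}^{1/p}$ and $d_p(F,G)\geq\{\ee_H|X-Y|^p\}^{1/p}$. The upper bound is almost immediate: take $U$ uniform on $[0,1]$, set $X^*\coloneqq F^{-1}(U)$ and $Y^*\coloneqq G^{-1}(U)$ where $F^{-1}(u)\coloneqq\inf\{t:F(t)\geq u\}$ is the generalized inverse (and similarly for $G^{-1}$), and verify that $X^*\stackrel{d}{=}F$, $Y^*\stackrel{d}{=}G$, together with
\[
\pp(X^*\leq x,\,Y^*\leq y)=\pp(U\leq F(x)\wedge G(y))=\min\{F(x),G(y)\}=H(x,y).
\]
Hence $(X^*,Y^*)$ is one admissible coupling with the prescribed marginals, and the infimum in \eqref{Mallowsdistance} directly yields $d_p(F,G)^p\leq \ee|X^*-Y^*|^p=\int_{\rr^2}|x-y|^p\,dH(x,y)$.

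The substantive part is the lower bound: I must show that $H$ minimizes $\int|x-y|^p\,d\nu$ over all couplings $\nu$ with marginals $F$ and $G$. The key fact is that for $p\geq 1$ the function $\psi(z)=|z|^p$ is convex, which makes the cost $c(x,y)=|x-y|^p$ submodular:
\[
c(x_1,y_1)+c(x_2,y_2)\leq c(x_1,y_2)+c(x_2,y_1)\qquad\text{whenever } x_1\leq x_2 \text{ and } y_1\leq y_2.
\]
I would then invoke the Fr\'echet--Hoeffding / Cambanis--Simons--Stout rearrangement inequality: among all couplings with the prescribed marginals, the upper Fr\'echet bound --- which here is precisely $H=\min\{F,G\}$ --- minimizes $\int c\,d\nu$ for every submodular $c$. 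Equivalently one can appeal to a Hoeffding-type identity expressing $\int c\,d\nu$ as a linear functional of the joint CDF and then apply the pointwise bound $\nu(\{X\leq s,\,Y\leq t\})\leq H(s,t)$.

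The main obstacle is the rigorous execution of the rearrangement argument for general $F,G$ (possibly atomic or heavy-tailed). A clean strategy is to first treat the case of finitely supported distributions by a bubble-sort pair-swap argument: each swap $(x_1,y_2),(x_2,y_1)\mapsto(x_1,y_1),(x_2,y_2)$ with $x_1\leq x_2$, $y_1\leq y_2$ weakly decreases $\int c\,d\nu$ by the submodularity inequality, and iterating drives $\nu$ towards the comonotone coupling $H$. For general $F,G$, approximate by empirical distributions built from $(F^{-1}(U_i),G^{-1}(U_i))_{i=1}^n$ and pass to the limit, using uniform integrability of $|X-Y|^p$, which is automatic in the only non-trivial case $d_p(F,G)<\infty$ since both $F$ and $G$ then have finite $p$-th moments.
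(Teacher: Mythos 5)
The paper does not actually prove Theorem~\ref{teorema de representacao Mallows}: it is quoted from Dorea and Ferreira \cite{MR2863804}, so there is no internal argument to compare yours against, and a self-contained proof is genuinely added value. Your outline is correct and is the standard derivation. The upper bound via the quantile coupling $X^*=F^{-1}(U)$, $Y^*=G^{-1}(U)$ is exactly the construction in the cited reference, and your verification that this coupling has joint distribution function $H=\min\{F,G\}$ is right (it rests on the equivalence $F^{-1}(u)\leqslant x \Leftrightarrow u\leqslant F(x)$ for the left-continuous generalized inverse). The lower bound via submodularity of $c(x,y)=|x-y|^p$ for $p\geqslant 1$ (equivalently, superadditivity of increments of the convex map $z\mapsto|z|^p$) together with the Fr\'echet--Hoeffding/Cambanis--Simons--Stout rearrangement inequality is also correct, and it is the part that turns the trivial inequality $d_p\leqslant\{\mathbb{E}_H|X-Y|^p\}^{1/p}$ into an equality. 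One caveat on your closing approximation step: to get the lower bound for a \emph{general} competitor coupling $\nu$ you must discretize $\nu$ itself (e.g.\ sample i.i.d.\ pairs from $\nu$, apply the swap argument to the empirical coupling, whose marginals are the empirical versions of $F$ and $G$, and use lower semicontinuity of $\mu\mapsto\int c\,d\mu$ to control the sorted side), whereas as written you only describe approximating the comonotone coupling $H$; also, $d_p(F,G)<\infty$ does not by itself force finite $p$-th moments of $F$ and $G$ (take $F=G$ heavy-tailed), so the uniform-integrability remark should be rephrased. The Hoeffding-identity route you mention avoids the discretization entirely and is the cleaner way to finish. In short: your approach buys self-containedness where the paper settles for a citation, at the cost of having to execute a rearrangement argument whose general case needs a little more care than your last paragraph currently gives it.
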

This theorem will be used in our one-dimensional setting.
When moving the discussion to the $m$-dimensional setting,
the following generalization will be required, see Bickel and Freedman \cite{MR630103}.

\begin{theorem}\label{theo-rep-geral}
	Let $X$ and $Y$ be the coordinate functions on 
	$\mathcal{B}\times\mathcal{B}$.
	The infimum of $d_{p}(\mu_1,\mu_2)$ is attained by 
	\[
	\int_{\mathcal{B}\times\mathcal{B}}\|x-y\|^p d\pi(x,y),
	\]
	for some probability $\pi$ on $\mathcal{B}\times\mathcal{B}$ such that 
	$\pi X^{-1}=\mu_1$ and $\pi Y^{-1}=\mu_2$.
\end{theorem}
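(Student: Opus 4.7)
The plan is to establish existence of an optimizer by the direct method of the calculus of variations in $\mathcal{P}(\mathcal{B}\times\mathcal{B})$: pick a minimizing sequence of couplings, extract a weak subsequential limit via Prokhorov's theorem, and conclude by lower semi-continuity of the cost.

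First, I would take a sequence $\{\pi_n\}_{n\geqslant 1}\subset\Pi(\mu_1,\mu_2)$ with
\[
\int_{\mathcal{B}\times\mathcal{B}}\|x-y\|^p\, d\pi_n(x,y) \,\longrightarrow\, \inf_{\pi\in\Pi(\mu_1,\mu_2)}\int_{\mathcal{B}\times\mathcal{B}}\|x-y\|^p\, d\pi(x,y).
\]
The infimum may be assumed finite, since otherwise the product coupling $\mu_1\otimes\mu_2$ trivially realizes it. Next I would verify that $\Pi(\mu_1,\mu_2)$ is tight: fix $\varepsilon>0$, and using that $\mathcal{B}$ is Polish and each $\mu_i$ is Borel, choose compact $K_i\subset\mathcal{B}$ with $\mu_i(K_i^c)<\varepsilon/2$. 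Then $K_1\times K_2$ is compact and for every $\pi\in\Pi(\mu_1,\mu_2)$,
\[
\pi\bigl((K_1\times K_2)^c\bigr) \leqslant \pi(K_1^c\times\mathcal{B}) + \pi(\mathcal{B}\times K_2^c) = \mu_1(K_1^c) + \mu_2(K_2^c) < \varepsilon.
\]
Prokhorov's theorem then yields a subsequence $\pi_{n_k}$ converging weakly to some $\pi^*\in\mathcal{P}(\mathcal{B}\times\mathcal{B})$, and testing weak convergence against bounded continuous functions that depend only on a single coordinate shows $\pi^* X^{-1}=\mu_1$ and $\pi^* Y^{-1}=\mu_2$, so $\pi^*\in\Pi(\mu_1,\mu_2)$.

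It remains to show that $\pi^*$ realizes the infimum. The integrand $(x,y)\mapsto\|x-y\|^p$ is continuous and non-negative but typically unbounded, so I would truncate: for each $M>0$ the function $\min(\|x-y\|^p,M)$ is bounded and continuous, hence by weak convergence
\[
\int \min(\|x-y\|^p,M)\, d\pi^* = \lim_{k\to\infty}\int \min(\|x-y\|^p,M)\, d\pi_{n_k} \leqslant \liminf_{k\to\infty}\int \|x-y\|^p\, d\pi_{n_k}.
\]
Monotone convergence on the left as $M\to\infty$ yields $\int\|x-y\|^p\, d\pi^*\leqslant\inf_{\pi\in\Pi(\mu_1,\mu_2)}\int\|x-y\|^p\, d\pi$, which together with $\pi^*\in\Pi(\mu_1,\mu_2)$ gives the claimed attainment.

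The delicate step is the lower semi-continuity argument, because weak convergence does not directly permit passing limits under an unbounded integrand; the truncation trick above is the standard workaround and uses only the non-negativity and continuity of $\|\cdot\|^p$. Preservation of marginals under weak limits is a routine projection argument, and tightness of $\Pi(\mu_1,\mu_2)$ is immediate from tightness of the individual marginals since $\mathcal{B}$ is a finite-dimensional Euclidean space in the paper's setting.
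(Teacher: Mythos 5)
The paper does not actually prove this statement: it is quoted verbatim from Bickel and Freedman (their Lemma 8.1 on the Mallows metric) and used as a black box. Your argument is therefore a genuine addition rather than a rephrasing, and it is correct: it is the standard direct-method existence proof for optimal couplings (minimizing sequence, tightness of $\Pi(\mu_1,\mu_2)$ inherited from tightness of the marginals, Prokhorov, preservation of marginals under weak limits, and lower semicontinuity of the cost via truncation by $M$ and monotone convergence). This is essentially the same mechanism Bickel and Freedman use, so nothing is lost relative to the cited source; what your write-up buys is self-containedness and generality (it works on any Polish space, not just the finite-dimensional Euclidean $\mathcal{B}$ the paper needs, and for all $p>0$ since $\|x-y\|^p$ remains continuous and non-negative when $0<p<1$). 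Two small points worth making explicit: in the degenerate case the infimum can be $+\infty$ even when both marginals lack $p$-th moments only if no coupling has finite cost, and then every coupling trivially attains it, as you note; and in the paper's setting $\mathcal{B}$ is finite-dimensional Euclidean, so the compact sets $K_i$ can simply be taken to be closed balls, making the tightness step elementary rather than an appeal to Ulam's theorem.
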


\section{Convergence in Wasserstein distance}\label{sec-conv}
In this section, we consider one-dimensional and $m$-dimensional 
random processes defined on some underlying complete probability space 
$(\Omega,\mathscr{F},\mathbb{P})$.
In what follows, $\mathbb{N}$ denotes the set of positive integers and for $x\in\mathbb{R}$, $\lfloor x \rfloor$ denotes the greatest integer less than or equal to $x$.
Let $\boldsymbol{X}\coloneqq \{X_i: i\in\mathbb{N}\}$ be a $m$-dimensional stochastic process,
\begin{align}\label{blocks}
	S_n\coloneqq \sum_{i=1}^{n}X_i
	\quad 
	\mbox{and}
	\quad 
	Y_{j,n}\coloneqq \sum\limits_{k=(j-1)\ell_n+1}^{j\ell_n} X_k,
	\quad 
	j=1,\ldots,m_n,
\end{align}
where the first $m_n=\lfloor n/\ell_n \rfloor$  blocks 
have size $\ell_n$ large enough such that
\begin{align}\label{condition:1}
	\lim_{n\to\infty}\ell_n=\infty, \quad \lim_{n\to\infty}{n\over \ell_n}=\infty 
	\quad \text{and} \quad \lim_{n\to\infty} {\ell_n^3\over m_n}= 0.
\end{align}
The condition \eqref{condition:1} is satisfied, for example, if we take $\ell_n=n^{\delta}$ with $\delta<1/4$.

Given $t>0,$ we denote by $Z_{t}(n)$ the stabilized partial sum of the r.v.'s
$X_i$'s. That is,
\begin{align}\label{norm-sum}
	Z_{t}(n) \coloneqq {S_{\lfloor nt\rfloor}\over\sigma\sqrt{n}},
\end{align}
where $\sigma^2\in(0,+\infty)$.
When $t=1$, we  simply write $Z_n$ instead of $Z_{1}(n)$.

\subsection{Dimension one}
\label{Positively associated random processes}
The stochastic process $\boldsymbol{X}$ is said
positively associated (see Esary et al.\cite{MR0217826}) if,
given two coordinate-wise non-decreasing functions 
$f,g:\mathbb{R}^n\to \mathbb{R}$ 
and  $i_1,\ldots,i_n\in\mathbb{N}$, we have
\[
\mathrm{Cov}\big(
f(X_{i_1},\ldots, X_{i_n})
,
g(X_{i_1},\ldots, X_{i_n})
\big)
\geqslant 
0,
\]
provided the covariance exists,
and by a coordinate-wise non-decreasing function 
we mean a function $f$ such that
\[
f(x_1,\ldots,x_n)\leqslant f(y_1,\ldots,y_n),
\]
whenever $x_j\leqslant y_j$ for all $j=1,\ldots, n$.  
\begin{lemma}[Newman and Wright \cite{MR624694}]
	\label{lemma-associatedproperties} 
	Let $\boldsymbol{X}$ be positively associated.	
	If all $X_j$'s have finite second moment, then the characteristic 
	functions $\displaystyle\phi_j(r_j)=\mathbb{E}\exp\{i r_jX_j\}$ and 
	$\phi(r_1,\cdots,r_n)=\mathbb{E}\exp\{i\sum_{j=1}^n r_jX_j\}$  satisfy
	\[
	\Bigg|\phi(r_1,\cdots,r_n)-\prod_{j=1}^{n}\phi_j(r_j)\Bigg|
	\leqslant
	{1\over 2}\sum_{1\leqslant j\neq k\leqslant n}|r_jr_k|\mathrm{Cov}(X_j,X_k).
	\]
\end{lemma}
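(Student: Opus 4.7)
The plan is to proceed by induction on $n$, with trivial base case $n = 1$.

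For $n = 2$ the statement reduces to $|\mathrm{Cov}(e^{ir_1 X_1}, e^{ir_2 X_2})| \leq |r_1 r_2|\mathrm{Cov}(X_1, X_2)$, which I would obtain from Hoeffding's covariance identity
\[
\mathrm{Cov}(f(X_1), g(X_2)) = \iint f'(s)\, g'(t)\, H(s,t)\, ds\, dt,\quad H(s,t) := \mathbb{P}(X_1 > s, X_2 > t) - \mathbb{P}(X_1 > s)\mathbb{P}(X_2 > t),
\]
combined with two consequences of positive association: applied to the coordinate-wise non-decreasing indicators $\mathbf{1}_{X_1 > s}$, $\mathbf{1}_{X_2 > t}$ one gets $H \geq 0$, and the choice $f(x) = x$, $g(y) = y$ identifies $\iint H\, ds\, dt = \mathrm{Cov}(X_1, X_2)$. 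Plugging $f(s) = e^{ir_1 s}$, $g(t) = e^{ir_2 t}$ (so $|f'| = |r_1|$, $|g'| = |r_2|$) and taking moduli gives the claim.

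For the inductive step I would use the telescoping formula
\[
\phi(r_1,\ldots,r_n) - \prod_{j=1}^n \phi_j(r_j) = \sum_{k=2}^n \Bigl(\prod_{j>k}\phi_j(r_j)\Bigr)\, \mathrm{Cov}\Bigl(\prod_{j<k} e^{ir_j X_j},\, e^{ir_k X_k}\Bigr),
\]
obtained by iterating $\mathbb{E}\prod_{j\leq k}e^{ir_j X_j} = \phi_k(r_k)\,\mathbb{E}\prod_{j<k}e^{ir_j X_j} + \mathrm{Cov}(\prod_{j<k}e^{ir_j X_j}, e^{ir_k X_k})$. Using $|\phi_j(r_j)|\leq 1$, matters reduce to the multivariate covariance estimate
\[
\Bigl|\mathrm{Cov}\Bigl(\prod_{j<k}e^{ir_j X_j},\, e^{ir_k X_k}\Bigr)\Bigr| \leq \sum_{j<k} |r_j r_k|\mathrm{Cov}(X_j, X_k),\qquad (\star)
\]
whose summation over $k = 2,\ldots,n$ produces $\sum_{1 \leq j < k \leq n}|r_j r_k|\mathrm{Cov}(X_j, X_k) = \tfrac{1}{2}\sum_{1 \leq j \neq k \leq n}|r_j r_k|\mathrm{Cov}(X_j, X_k)$, closing the induction.

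The main obstacle is $(\star)$, since the first argument depends on $X_1,\ldots,X_{k-1}$ jointly and the coefficients $r_j$ may have mixed signs. When all $r_j \geq 0$, the sum $W := \sum_{j<k}r_j X_j$ is coordinate-wise non-decreasing in $(X_1,\ldots,X_{k-1})$, so $(W, X_k)$ inherits positive association from $\mathbf{X}$, and the $n=2$ Hoeffding argument applied to this pair yields $|r_k|\mathrm{Cov}(W, X_k) = \sum_{j<k}|r_j r_k|\mathrm{Cov}(X_j, X_k)$ directly. For general signs I would peel off $X_k$ using Hoeffding's identity only in that coordinate,
\[
\mathrm{Cov}\Bigl(\prod_{j<k}e^{ir_j X_j},\, e^{ir_k X_k}\Bigr) = i r_k \int e^{ir_k t}\, \mathrm{Cov}\Bigl(\prod_{j<k}e^{ir_j X_j},\, \mathbf{1}_{X_k > t}\Bigr)\, dt,
\]
and control the inner covariance via a multivariate monotone-decomposition lemma: for any real-valued $u$ on $\mathbb{R}^{k-1}$ with $|\partial_j u| \leq L_j$, the functions $u^{\pm}(\mathbf{x}) = \pm u(\mathbf{x}) + \sum_{j<k} L_j x_j$ are coordinate-wise non-decreasing, so positive association forces $\mathrm{Cov}(u^{\pm}, \mathbf{1}_{X_k > t}) \geq 0$, which upon adding the two sign combinations collapses to $|\mathrm{Cov}(u, \mathbf{1}_{X_k > t})| \leq \sum_{j<k} L_j\, \mathrm{Cov}(X_j, \mathbf{1}_{X_k > t})$. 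Applying this to the real and imaginary parts $u = \cos W$ and $u = \sin W$ (each Lipschitz with constant $|r_j|$ in variable $X_j$), combined with the elementary identity $\int \mathrm{Cov}(X_j, \mathbf{1}_{X_k > t})\, dt = \mathrm{Cov}(X_j, X_k)$, then delivers $(\star)$. The most delicate point is aggregating the real and imaginary contributions of $e^{iW}$ without a spurious constant factor, which requires exploiting the unit-modulus constraint $|e^{iW}| = 1$ so that the four trigonometric pieces compose to exactly the right bound.
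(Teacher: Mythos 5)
The paper offers no proof of this lemma: it is quoted directly from Newman--Wright \cite{MR624694} (the inequality itself goes back to Newman's work on normal fluctuations under FKG), so there is no in-paper argument to compare against. Your reconstruction is essentially the standard proof from that literature, and its outline is correct: Hoeffding's identity together with $H\geqslant 0$ (from association applied to the indicators) handles $n=2$ with the exact constant; the telescoping identity over the last coordinate is valid (the products $\prod_{j>k}\phi_j$ have modulus at most $1$, and the sum over $k$ of $\sum_{j<k}$ reproduces $\tfrac12\sum_{j\neq k}$); and the reduction of $(\star)$ to the one-coordinate Hoeffding peeling plus the monotone-decomposition bound $|\mathrm{Cov}(u,\mathbf{1}_{X_k>t})|\leqslant\sum_{j<k}L_j\,\mathrm{Cov}(X_j,\mathbf{1}_{X_k>t})$ is exactly how the multivariate step is done. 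The one point you leave hanging --- combining the real and imaginary parts of $e^{iW}$ without a spurious factor $2$ --- has a standard one-line resolution that you should make explicit: with $Y=\mathbf{1}_{X_k>t}$ real, write $\mathrm{Cov}(e^{iW},Y)=\rho\, e^{i\alpha}$ with $\rho\geqslant 0$; then $\rho=\mathrm{Re}\,\mathrm{Cov}(e^{i(W-\alpha)},Y)=\mathrm{Cov}(\cos(W-\alpha),Y)$, and $\cos(W-\alpha)$ is a \emph{single} real function with $|\partial_j\cos(W-\alpha)|\leqslant |r_j|$, so your decomposition lemma applies to it directly and yields $|\mathrm{Cov}(e^{iW},\mathbf{1}_{X_k>t})|\leqslant\sum_{j<k}|r_j|\,\mathrm{Cov}(X_j,\mathbf{1}_{X_k>t})$ with no loss; integrating in $t$ and using $\int\mathrm{Cov}(X_j,\mathbf{1}_{X_k>t})\,dt=\mathrm{Cov}(X_j,X_k)$ (the integrand being nonnegative by association) closes $(\star)$ with the correct constant. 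Treating $\cos W$ and $\sin W$ separately, as written, would only give the bound with an extra factor $2$.
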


\bigskip 

For the sake of clarity, if $X\stackrel{d}{=}F$ and $Y\stackrel{d}{=}G$, 
then the Wasserstein distance between their distribution functions $F$ and $G$ 
will be denoted by $d_p(X,Y)$.

\begin{theorem}\label{first-theo-1}
	Let $\boldsymbol{X}$ be a  centered, one-dimensional positively associated and stationary stochastic process.
	Assume that the following conditions are satisfied:
	\begin{align}
		& \lim_{n\to\infty}{1\over n}\mathrm{Var}(S_n)
		= 
		\sigma^2\in(0,+\infty), \label{h1}
		\\[0,2cm]
		& \lim_{n\to\infty} {1\over m_n\ell_n}
		\sum_{j=1}^{m_n}
		\mathrm{Var}(Y_{j,n})
		=
		\sigma^2 \quad \text{and} \quad  \label{h2}
		\\[0,2cm]
		&\mathbb{E}|X_j|^3<C_*<+\infty, \quad\forall j\in\mathbb{N}. \label{h3}
	\end{align}
	Then for each $0<p\leqslant2$,
	\begin{align}\label{conv-1}
		\lim_{n\to\infty}d_p({Z_n},Z)= 0 \quad \text{and} \quad
		\lim_{n\to\infty}\mathbb{E}|Z_n|^p=\mathbb{E}|Z|^p, 
	\end{align}
	where $ Z\sim N(0,1)$.
\end{theorem}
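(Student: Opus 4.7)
The plan is to reduce the problem to a classical CLT for independent blocks via the Newman-Wright decoupling lemma, and then upgrade weak convergence to Wasserstein convergence through uniform moment control. First I would write $S_n = \sum_{j=1}^{m_n} Y_{j,n} + R_n$, where $R_n = \sum_{k = m_n \ell_n+1}^{n} X_k$ collects fewer than $\ell_n$ leftover terms. Using stationarity and \eqref{h1} to control $\mathrm{Var}(S_k)$ for any block size $k \leqslant \ell_n$, one obtains $\mathrm{Var}(R_n) = O(\ell_n)$, so $R_n/(\sigma\sqrt{n}) \to 0$ in $L^2$ and contributes nothing to the limit.

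For the main term $T_n := \sum_{j=1}^{m_n} Y_{j,n}/(\sigma\sqrt{n})$, I would apply Lemma \ref{lemma-associatedproperties} to the block sums $Y_{1,n}, \ldots, Y_{m_n,n}$, which inherit positive association as coordinate-wise non-decreasing functions of $\boldsymbol{X}$. Taking $r_j = r/(\sigma\sqrt{n})$ for each $j$ yields
\begin{equation*}
\left|\phi_{T_n}(r) - \prod_{j=1}^{m_n}\phi_{Y_{j,n}}\!\left(\frac{r}{\sigma\sqrt{n}}\right)\right| \leqslant \frac{r^2}{2\sigma^2 n}\sum_{1 \leqslant j \neq k \leqslant m_n}\mathrm{Cov}(Y_{j,n}, Y_{k,n}).
\end{equation*}
Expanding $\mathrm{Var}(S_n - R_n) = \sum_j \mathrm{Var}(Y_{j,n}) + \sum_{j\neq k}\mathrm{Cov}(Y_{j,n},Y_{k,n})$ and using \eqref{h1}, \eqref{h2} together with the remainder estimate (via Cauchy-Schwarz), one finds $\frac{1}{n}\sum_{j\neq k}\mathrm{Cov}(Y_{j,n},Y_{k,n}) \to 0$, so the decoupling error vanishes.

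Next I would invoke the classical Lyapunov CLT for independent copies $\widetilde Y_{j,n}$ of the $Y_{j,n}$'s, which by stationarity all share the law of $Y_{1,n}$. The variance is correct, since $m_n\mathrm{Var}(Y_{1,n})/(\sigma^2 n) \to 1$ by \eqref{h2}; for the third-moment condition, Minkowski's inequality and \eqref{h3} yield $\mathbb{E}|Y_{1,n}|^3 \leqslant \ell_n^3 C_*$, hence
\begin{equation*}
\frac{m_n\,\mathbb{E}|Y_{1,n}|^3}{(\sigma^2 n)^{3/2}} \leqslant \frac{C_*\,m_n \ell_n^3}{(\sigma^2 n)^{3/2}} = O\!\left(\frac{\ell_n^2}{\sqrt{n}}\right) = o(1),
\end{equation*}
the last equality being equivalent to $\ell_n^4/n \to 0$, which follows from $\ell_n^3/m_n \to 0$ in \eqref{condition:1} since $m_n \leqslant n/\ell_n$. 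Consequently $\prod_{j=1}^{m_n} \phi_{Y_{j,n}}(r/(\sigma\sqrt{n})) \to e^{-r^2/2}$, and combined with the two previous steps we obtain that $Z_n$ converges in distribution to $Z \sim N(0,1)$.

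Finally, to pass from weak convergence to $d_p$ convergence, I would apply Theorem \ref{BickelFriedmantheorem} at $p = 2$: by \eqref{h1}, $\mathbb{E}|Z_n|^2 = \mathrm{Var}(S_n)/(\sigma^2 n) \to 1 = \mathbb{E}|Z|^2$, which together with weak convergence gives $d_2(Z_n, Z) \to 0$. A short Jensen argument (since $|x|^{p/q}$ is concave for $p \leqslant q$) shows $d_p \leqslant d_q$ for $p \leqslant q$, so $d_p(Z_n, Z) \to 0$ for every $0 < p \leqslant 2$. The moment convergence $\mathbb{E}|Z_n|^p \to \mathbb{E}|Z|^p$ for $p < 2$ then follows from the uniform $L^2$ bound and weak convergence via uniform integrability. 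The main technical tension lies in simultaneously controlling the Newman-Wright error (which requires the cross-block covariances to be small on average) and the Lyapunov condition (which requires blocks not to be too large in $L^3$): the scaling $\ell_n^3/m_n \to 0$ in \eqref{condition:1} is precisely what kills both error terms at once, and I expect this balance to be the only delicate point in the proof.
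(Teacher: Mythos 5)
Your proposal is correct and follows essentially the same route as the paper: block the partial sums, decouple the blocks with the Newman--Wright characteristic-function bound (Lemma \ref{lemma-associatedproperties}) using \eqref{h1}--\eqref{h2}, apply a third-moment CLT to the independent block copies (you invoke Lyapunov's CLT where the paper invokes Berry--Esseen, with the identical estimate $m_n\ell_n^3 C_*/(\sigma^2 n)^{3/2}\to 0$ coming from \eqref{condition:1}), and then upgrade weak convergence to $d_2$ via Theorem \ref{BickelFriedmantheorem} and to $0<p<2$ via Lyapunov's inequality. The only cosmetic difference is that you absorb the leftover terms $R_n$ and the normalization mismatch directly in $L^2$, whereas the paper routes the same estimates through the quantities $\mathrm{E}_n(Z_n)$, $\mathrm{E}_n(Z)$ and the representation theorem.
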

\begin{proof}	
	By Minkowski's inequality and by \eqref{Mallowsdistance}, 
	\begin{align*}
		d_2({Z_n},Z)
		\leqslant
		{\rm E}_n^{1/2}(Z_n)+{\rm E}_n^{1/2}(Z),
	\end{align*}
	where 
	${\rm E}_n(X)
	\coloneqq 
	\mathbb{E}\big({1\over\sigma\sqrt{m_n\ell_n}}S_{m_n\ell_n}-X\big)^2$ 
	for any real-valued random variable $X$.
	Then
	\begin{align}\label{limits-proof}
		\lim_{n\to\infty}d_2({Z_n},Z)=0, \quad \text{whenever} \quad 
		\lim_{n\to\infty}{\rm E}_n(Z_n)=0 \quad \text{and} \quad \lim_{n\to\infty}{\rm E}_n(Z)=0.
	\end{align}
	
	Consider the blocks \eqref{blocks} and assume that the block size $\ell_n$ satisfies
	\eqref{condition:1}.
	We will first prove that $\lim_{n\to\infty}{\rm E}_n(Z_n)=0$. For this, 
	using properties of variance and the positivity of covariances, we have  
	\begin{align*}
		{\rm E}_n(Z_n)
		&=
		{1\over \sigma^2}
		\mathrm{Var}
		\left(
		\sum\limits_{j=1}^{m_n\ell_n}
		\Big({1\over\sqrt{n}}
		- 
		{1\over\sqrt{m_n\ell_n}}\Big)
		X_j	
		+ 
		\sum\limits_{j=m_n\ell_n+1}^{n}
		{1\over\sqrt{n}}
		X_j
		\right) 
		\\[0,2cm]
		&\leqslant
		{2\over\sigma^2}\Big({1\over \sqrt{m_n\ell_n}}-{1\over\sqrt{n}}\Big)^2
		\sum\limits_{i,j=1}^{m_n\ell_n}
		\mathrm{Cov}(X_i,X_j)	
		+
		{2\over n\sigma^2}
		\sum\limits_{i,j=m_n\ell_n+1}^{n}
		\mathrm{Cov}(X_i,X_j).
	\end{align*}
	By inequality 
	$
	\big({1\over\sqrt{m_n\ell_n}}\pm{1\over\sqrt{n}}\big)^2
	\leqslant
	{\ell_n\over nm_n (\sqrt{m_n\ell_n}\mp\sqrt{n})^2},
	$
	the rhs of above inequality is 
	\begin{align*}
		&\leqslant
		{2\over n\sigma^2}\Big({\ell_n\over\sqrt{m_n\ell_n}+\sqrt{n}}\Big)^2
		\
		{1\over m_n\ell_n}
		\sum\limits_{i,j=1}^{m_n\ell_n}\mathrm{Cov}(X_i,X_j) 
		\\[0,2cm]	
		&\hspace*{1,3cm}	
		+
		{2\over\sigma^2}\Big({n-m_n\ell_n\over n}\Big)
		\ 
		{1\over n-m_n\ell_n}
		\sum\limits_{i,j=m_n\ell_n+1}^{n}\mathrm{Cov}(X_i,X_j).
	\end{align*}
	Then, by \eqref{h1} ${\rm E}_n(Z_n)\to 0$ as $n\to\infty$.	
	
	The next step is to prove that $\lim_{n\to\infty}{\rm E}_n(Z)=0$.
	In fact, if
	\[
	I_{n}(t)
	\coloneqq
	\left|
	\mathbb{E}
	\exp\Big\{i{t\over\sqrt{m_n\ell_n}}\sum_{j=1}^{m_n} Y_{j,n}\Big\}
	-
	\prod_{j=1}^{m_n}
	\mathbb{E}
	\exp\Big\{i{t\over\sqrt{m_n\ell_n}} Y_{j,n}\Big\}
	\right|,
	\quad t\in\mathbb{R},
	\]
	by Lemma \ref{lemma-associatedproperties},
	\begin{align*}
		I_{n}(t)
		\leqslant \
		{|t|\over 2m_n\ell_n}
		\sum_{\substack{i,j=1\\ i\neq j}}^{m_n}
		\mathrm{Cov}(Y_{i,n},Y_{j,n})
		= \
		{|t|\over 2}
		\Big(
		{1\over m_n\ell_n}
		\mathrm{Var}\Big(\sum_{j=1}^{m_n\ell_n} X_j \Big)
		-
		{1\over m_n\ell_n}
		\sum_{i=1}^{m_n}\mathrm{Var}(Y_{i,n})
		\Big).
	\end{align*}
	Taking $n\to\infty$ in the above inequality, by \eqref{h1}-\eqref{h2} follows that 
	$
	\lim_{n\to\infty}I_n(t)=0.
	$
	That is, the blocks $Y_{j,n}/\sqrt{m_n\ell_n}$ $j=1,\ldots,m_n$ are asymptotically 
	independent.
	On the other hand, denoting 
	$
	\mathcal{M}_n
	\coloneqq
	\sum_{j=1}^{m_n} Y_{j,n}/\sigma\sqrt{m_n\ell_n}	
	$	
	and using the independence of the blocks, it follows from \eqref{h2} 	that
	\begin{align}\label{pru1}
		\lim_{n\to\infty}\mathrm{Var}(\mathcal{M}_n)
		=
		\lim_{n\to\infty}
		{1\over \sigma^2 m_n\ell_n}
		\sum_{j=1}^{m_n} 
		\mathrm{Var}(Y_{j,n})
		=
		1.
	\end{align}
	For each fixed $n$, by using Minkowski's inequality and \eqref{h3} we get	
	$
	\mathbb{E}|Y_{j,n}|^3
	\leqslant
	\ell_n^3 C_*.
	$
	Since the $Y_{j,n}$'s are zero-mean and independent r.v.'s  
	with finite third moment, it follows from Berry-Esseen's Theorem 
	(see, e.g., Feller \cite{MR0270403}) that
	\begin{align*}
		\sup_{x\in\mathbb{R}}
		\left|
		\mathbb{P}
		\Big(	
		{
			{ 		
				\mathcal{M}_n \over \sqrt{\mathrm{Var}(\mathcal{M}_n)}	
			}
			\leqslant x	
		}
		\Big)	
		-\Phi(x)	
		\right|
		\ \leqslant \
		{6m_n\ell_n^3 C_*\over (\sigma^2m_n\ell_n)^{3/2} \{\mathrm{Var}(\mathcal{M}_n)\}^{3/2}},
		\quad \Phi\stackrel{d}{=}N(0,1).
	\end{align*}
	By \eqref{condition:1} and  \eqref{pru1} the rhs of the above inequality tends to zero, 
	when $n\to\infty$. Therefore,	
	$
	{\mathcal{M}_n/ \sqrt{\mathrm{Var}(\mathcal{M}_n)}}		
	\stackrel{\mathscr{D}}{\rightarrow}
	Z,
	$
	where $\stackrel{\mathscr{D}}{\rightarrow}$ denotes convergence in distribution.
	Then, by \eqref{pru1} and 
	by Slutsky's Theorem, 
	$
	\mathcal{S}_n \stackrel{\mathscr{D}}{\rightarrow}Z
	$	
	and
	$
	\lim_{n\to\infty}
	\mathrm{Var}	
	\left(	
	\mathcal{S}_n
	\right)
	=1
	$
	or equivalently (see Theorem \ref{BickelFriedmantheorem})
	$
	\lim_{n\to\infty}d_2({\mathcal{M}_n},Z)= 0.
	$
	
	From Theorem \ref{teorema de representacao Mallows} follows that there exists a r.v.
	$Z^*\stackrel{d}{=}\Phi$ such that the joint distribution of 
	$(\mathcal{M}_n,Z^*)$ is given by $H(x,y)= \min\{F_{\mathcal{M}_n}(x), \Phi(y)\}$ and
	\[
	d_2(\mathcal{M}_n,Z)
	=
	\mathbb{E}(\mathcal{M}_n - Z^*)^2 \to 0 \quad \text{as} \ n\to\infty
	\ \Leftrightarrow \ {\rm E}_n(Z)\to 0 \quad \text{as} \ n\to\infty.
	\]
	
	Hence, by \eqref{limits-proof} $\lim_{n\to\infty}d_2({Z_n},Z)= 0$.
	Therefore, by Theorem \ref{teorema de representacao Mallows} 
	there exists $\tilde{Z}^{*}\stackrel{d}{=}\Phi$ such that
	\[
	d_2({Z}_n,Z)
	=
	\mathbb{E}({Z}_n - \tilde{Z}^*)^2 \to 0 \quad \text{as} \ n\to\infty.
	\]
	From the definition of Wasserstein distance \eqref{Mallowsdistance} and 
	Lyapunov's inequality we have for $0 < p\leqslant 2$,  
	
	\[
	d_p(Z_n,Z)\leqslant \big\{\mathbb{E}|Z_n - \tilde{Z}^*|^p\big\}^{1/p}
	\to 0
	\quad \text{as} \ n\to\infty.
	\]
\end{proof}
\begin{corollary}\label{coro-new-1}
	Under the conditions of Theorem \ref{first-theo-1}, for each $0<p\leqslant2$ we have
	\begin{align}\label{conv-2}
		\lim_{n\to\infty}
		d_p(Z_{t}(n),B_t)= 0 \quad \text{and} \quad 
		\lim_{n\to\infty}\mathbb{E}|Z_{t}(n)|^p=\mathbb{E}|B_t|^p, \quad  t\in[0,1],
	\end{align}	
	where $B_t$ is the standard one-dimensional Brownian motion.
\end{corollary}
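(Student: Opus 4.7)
The plan is to reduce the corollary to Theorem \ref{first-theo-1} via the scaling identity
\[
Z_{t}(n)
\;=\;
\sqrt{\frac{\lfloor nt\rfloor}{n}}\;Z_{\lfloor nt\rfloor},
\]
together with the distributional representation $B_t\stackrel{d}{=}\sqrt{t}\,Z$ for $Z\sim N(0,1)$. The case $t=0$ is immediate, since $Z_{0}(n)=0=B_0$, so I would fix $t\in(0,1]$. Because the hypotheses \eqref{h1}--\eqref{h3} are intrinsic properties of the stationary process $\boldsymbol{X}$ and do not depend on how the index is parametrized, Theorem \ref{first-theo-1} applies along the subsequence $\lfloor nt\rfloor\to\infty$, yielding $d_p(Z_{\lfloor nt\rfloor},Z)\to 0$ and $\mathbb{E}|Z_{\lfloor nt\rfloor}|^p\to\mathbb{E}|Z|^p$ for each $0<p\leqslant 2$.

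For the Wasserstein convergence, by Lyapunov's inequality $d_p\leqslant d_2$ (the same step used at the end of the proof of Theorem \ref{first-theo-1}) it suffices to treat $p=2$. The scaling property $d_2(cX,cY)=|c|\,d_2(X,Y)$ is immediate from the infimum representation \eqref{Mallowsdistance}, and the coupling placing $\sqrt{\lfloor nt\rfloor/n}\,Z$ and $\sqrt{t}\,Z$ on a common standard normal gives the elementary bound $d_2(\sqrt{\lfloor nt\rfloor/n}\,Z,\sqrt{t}\,Z)\leqslant\bigl|\sqrt{\lfloor nt\rfloor/n}-\sqrt{t}\,\bigr|$. Combining these with Minkowski's inequality (applied to the intermediate point $\sqrt{\lfloor nt\rfloor/n}\,Z$), I would write
\[
d_2\bigl(Z_{t}(n),B_t\bigr)
\;\leqslant\;
\sqrt{\tfrac{\lfloor nt\rfloor}{n}}\;d_2\bigl(Z_{\lfloor nt\rfloor},Z\bigr)
\;+\;
\Bigl|\sqrt{\tfrac{\lfloor nt\rfloor}{n}}-\sqrt{t}\,\Bigr|,
\]
in which both summands vanish as $n\to\infty$. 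Lyapunov's inequality then upgrades this to $d_p(Z_{t}(n),B_t)\to 0$ for every $0<p\leqslant 2$.

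For the convergence of moments, the same scaling identity gives
$
\mathbb{E}|Z_{t}(n)|^p=(\lfloor nt\rfloor/n)^{p/2}\,\mathbb{E}|Z_{\lfloor nt\rfloor}|^p,
$
which combined with the moment statement of Theorem \ref{first-theo-1} yields $\mathbb{E}|Z_{t}(n)|^p\to t^{p/2}\mathbb{E}|Z|^p=\mathbb{E}|B_t|^p$. No serious obstacle arises; the only point worth flagging is the implicit use of Theorem \ref{first-theo-1} along the index $\lfloor nt\rfloor$ rather than along $n$ directly, which is legitimate precisely because \eqref{h1}--\eqref{h3} are statements about the stationary process $\boldsymbol{X}$ itself.
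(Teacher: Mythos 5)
Your proof is correct, and it rests on the same key reduction as the paper's: the scaling identity $Z_t(n)=\sqrt{\lfloor nt\rfloor/n}\,Z_{\lfloor nt\rfloor}$ together with $\lfloor nt\rfloor/n\to t$ and an appeal to Theorem \ref{first-theo-1} along the index $\lfloor nt\rfloor$. Where you diverge is in how the conclusion $d_2(Z_t(n),B_t)\to0$ is extracted from this. The paper translates $d_2(Z_n,Z)\to0$ into weak convergence plus convergence of second moments via Theorem \ref{BickelFriedmantheorem}, applies Slutsky's theorem to get $Z_t(n)\stackrel{\mathscr{D}}{\to}B_t$ and $\mathrm{Var}(Z_t(n))\to t$, and then invokes Theorem \ref{BickelFriedmantheorem} again in the reverse direction. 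You instead stay entirely inside the metric: the triangle inequality through the intermediate point $\sqrt{\lfloor nt\rfloor/n}\,Z$, the homogeneity $d_2(cX,cY)=|c|\,d_2(X,Y)$, and the trivial coupling bound $d_2\bigl(\sqrt{\lfloor nt\rfloor/n}\,Z,\sqrt{t}\,Z\bigr)\leqslant\bigl|\sqrt{\lfloor nt\rfloor/n}-\sqrt{t}\bigr|$. Your route is more elementary and self-contained (it does not use the Bickel--Freedman equivalence at all) and has the side benefit of an explicit quantitative bound on $d_2(Z_t(n),B_t)$ in terms of $d_2(Z_{\lfloor nt\rfloor},Z)$; the paper's route is shorter on the page because the equivalence theorem is already stated and is reused elsewhere. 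Your separate treatment of $t=0$ and your explicit derivation of the moment convergence $\mathbb{E}|Z_t(n)|^p=(\lfloor nt\rfloor/n)^{p/2}\,\mathbb{E}|Z_{\lfloor nt\rfloor}|^p\to t^{p/2}\,\mathbb{E}|Z|^p$ are details the paper leaves implicit (there the moment statement follows from Theorem \ref{BickelFriedmantheorem}); both are fine. The final passage to $0<p<2$ by Lyapunov's inequality is identical in the two arguments.
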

\begin{proof}
	Assume first that $p=2$.
	By Theorem \ref{first-theo-1} we have 
	$\lim_{n\to\infty}d_2(Z_n,Z)= 0$ or equivalently 
	\begin{align}\label{ineq-im}
		Z_n\stackrel{\mathscr{D}}{\rightarrow}Z, \  Z\sim N(0,1),
		\quad 
		\mbox{and}
		\quad 
		\lim_{n\to\infty}\mathrm{Var}(Z_n)= 1.
	\end{align}
	Since,
	\[
	{1\over\sigma\sqrt{n}}S_{\lfloor nt\rfloor}
	=
	\sqrt{\lfloor nt\rfloor\over n} \ {1\over\sigma\sqrt{{\lfloor nt\rfloor}}}S_{\lfloor nt\rfloor}
	\quad \text{and} \quad 
	\lim_{n\to\infty}{\lfloor nt\rfloor\over n}= t,
	\]
	by \eqref{ineq-im} and Slutsky's Theorem we obtain
	$
	Z_t(n)\stackrel{\mathscr{D}}{\rightarrow} B_t
	$
	and
	$
	\lim_{n\to\infty}\mathrm{Var}(Z_t(n))=t.
	$
	By Theorem \ref{BickelFriedmantheorem}, it follows that
	$\lim_{n\to\infty}d_2(Z_{t}(n),B_t)= 0$.
	
	As in the proof of Theorem \ref{first-theo-1}, 
	the Lyapunov  inequality completes the proof for $0 <p < 2$.
\end{proof}

\subsection*{Applications of Theorem \ref{first-theo-1}}

The examples and discussion presented in this section are inspired by the ones in 
the preprint Cioletti et al. \cite{2017arXiv170103747C}.

For the volume $\Lambda_N = \{1,2,\dots ,N\}$, the Gibbs measure of 
the one-dimensional Ising model  with free boundary conditions, at inverse 
temperature $\beta>0$ on $\Lambda_N$, is given by
\begin{align}\label{Ising model}
	\mu_{N}^{\beta}(\sigma)
	=
	{1\over		Z_N(\beta)
	}
	\exp\Big(
	\beta 
	\sum_{1\le i<j\le N}V_{ij}\sigma_i\sigma_j
	\Big),
\end{align}
where $\sigma=(\sigma_1,\ldots,\sigma_N)\in \{-1,1\}^N$, the interactions
$V_{ij}$  being the same we used
to define our two-dimensional random polymer model \eqref{hamiltonian}
and
\[
Z_N(\beta)=\sum_{\sigma}\exp\Big(
\beta 
\sum_{1\leqslant i<j\leqslant N}V_{ij}\sigma_i\sigma_j
\Big)
\]
is the normalization factor.

Let $\mu^{\beta}$ be the thermodynamical limit of $\mu_{\Lambda}^{\beta}$ when 
$\Lambda\to \mathbb{N}$.
The susceptibility $\chi(\beta)$ of the Ising model at the inverse temperature $\beta$ 
is defined by
\begin{align}\label{susceptibility}
	\chi(\beta)
	\coloneqq
	\sum_{n=1}^{\infty}
	\left\{
	\mu^{\beta}(\sigma_1\sigma_n)-\mu^{\beta}(\sigma_1)\mu^{\beta}(\sigma_n)
	\right\}.
\end{align}

Let $\boldsymbol{X}=\{X_i:i\in\mathbb{N}\}$ be a stochastic process
defined on $\Omega=\{-1,1\}^{\mathbb{N}}$, where the variables $X_i$
are projections, i.e., for
$\sigma=(\sigma_{1},\sigma_{2},\ldots)\in\{-1,1\}^{\mathbb{N}}$ we 
have $X_i(\sigma)=\sigma_{i}, \, \forall i.$
\begin{example}\label{ex-1}
	For fixed $L> 0$, define $V_{ij}$ as: 
	\[
	V_{ij}=V,\quad i,j\in\mathbb{N} \ \text{s.t.} \ 0<|i-j|\leqslant L,
	\]
	where $V>0$ is a constant.
	In this case, it is well-known that the
	set of the Gibbs measures $\mathscr{G}(\beta)$
	is a singleton. 
	It can be verified that $\boldsymbol{X}$ on $(\Omega,\mathscr{F},\mu)$ is not a sequence of
	independent r.v.'s by applying the GKS-II inequality. 
	Moreover, one can verify that $\boldsymbol{X}$ on $(\Omega,\mathscr{F},\mu)$
	is stationary and positively
	associated by using the FKG inequality; see Fortuin et al. \cite{MR0309498}. 
	From the Lieb-Simon inequality $($cf. Lieb \cite{MR589427} and Simon \cite{MR589426}$)$
	follows that the susceptibility $\chi(\beta)<+\infty$
	$($for a more recent version see Duminil-Copin and Tassion \cite{MR3477351}$)$. 
	By a straightforward calculation, we have	
	\begin{align*}
		\lim_{n\to\infty}
		{1\over n}\mathrm{Var}(S_n)
		= 
		\mathrm{Var}(\sigma_1)
		+ 
		2\lim_{n\to\infty}
		{1\over n} \sum_{j=2}^n \sum_{i=1}^{j-1} \mathrm{Cov}(\sigma_1,\sigma_j)
		= 
		\chi(\beta)
	\end{align*}
	and	similarly
	$
	\lim_{n\to\infty}	
	\sum_{j=1}^{m_n}\mathrm{Var}(Y_{j,n})/m_n\ell_n
	=	
	\chi(\beta).
	$
	Therefore, all the hypotheses of Theorem
	\ref{first-theo-1} hold. Moreover, the convergences 
	\eqref{conv-1}  and \eqref{conv-2} also hold for $0<p\leqslant 2$.
\end{example}
\begin{example}\label{exemplo-ising-longo-alcance}
	For all $i\in\mathbb{N}$ we define $V_{ii}=0$ and 
	\[
	V_{ij} = \beta |i-j|^{-\alpha}, \quad i,j\in\mathbb{N}\ \text{and}\ i\neq j
	\]
	where $\beta>0$ and $\alpha>1$.
	The analysis in terms of the parameter $\alpha$ is twofold.
	The first one is $1<\alpha\leqslant 2$ and the 
	second is $\alpha>2$. 
	
	Suppose that $1<\alpha\leqslant 2$. 
	In this case there is a real number $\beta_c(\alpha)\in(0,+\infty)$ called critical point such that, 
	for all $\beta<\beta_c(\alpha)$ the set of the Gibbs measures $\mathscr{G}(\beta)$
	is a singleton 
	$($see Dyson \cite{MR0436850}; and Fr$\rm\ddot{o}$hlich and Spencer \cite{MR660541}$)$ and 
	the unique probability measure $\mu_{\beta,\alpha}$  
	has the FKG property and the 
	stochastic process $\boldsymbol{X}= \{X_i:i\in\mathbb{N}\}$ on 
	$(\Omega,\mathscr{F},\mu_{\beta,\alpha})$ is associated
	and stationary. 
	At high temperature the covariance $\mathrm{Cov}_{\mu_{\beta,\alpha}}(X_0,X_i)$ decays polynomially
	with the same rate as the interaction decay, and since $\alpha>1$, in particular
	the susceptibility $\chi(\mu_{\beta,\alpha})$ is finite.
	The conditions \eqref{h1}-\eqref{h3} are checked in analogy to that made in Example \ref{ex-1}. 
	In this case, the convergences \eqref{conv-1} and \eqref{conv-2} hold for $0<p\leqslant 2$.	
	On the other hand, for all $\beta>\beta_c(\alpha)$,
	the set $\mathscr{G}(\beta)$ 
	has infinitely many elements. Then, we can not
	ensure that the stochastic process $\boldsymbol{X}$ on 
	$(\Omega,\mathscr{F},\mu)$ is stationary
	for any $\mu\in\mathscr{G}(\beta)$. Moreover,
	the susceptibility is not finite anymore. 	
	
	The case $\alpha>2$ is similar to the case $1<\alpha\leqslant 2$
	and $\beta<\beta_{c}(\alpha)$, but no restriction on the parameter $\beta$
	is needed to ensure the uniqueness of the Gibbs measures and the other
	used properties.	
\end{example}

\subsection{Higher dimensions}
\label{2-Positively associated random processes}
Given $m\in\mathbb{N}$, 
let $\{(X^1_i,\ldots,X^m_i):i\in\mathbb{N}\}$ be a $m$-dimensional stationary random process 
with independent coordinates. We define
\begin{align}\label{norm-sum-1}
	S_n^j
	\ \coloneqq \
	\sum_{i=1}^n X^j_i,
	\qquad 
	Z^j_{t}(n) \ \coloneqq \ {S^j_{\lfloor nt\rfloor}\over\sigma\sqrt{n}},
	\quad j=1,\ldots,m,
\end{align}
$\boldsymbol{S}_{n}^m\coloneqq(S_n^1,\ldots, S_n^m)$ and $\boldsymbol{Z}_{n}^m(t)\coloneqq(Z^1_{t}(n),\ldots, Z^m_{t}(n))$,
where $\sigma^2\in(0,+\infty)$.
\begin{theorem}\label{second-theo}
	Let $\{X^j_i:i\in\mathbb{N}\}$, $j=1,\ldots,m$ be a centered, one-dimensional 
	positively associated and stationary random process satisfying the hypotheses 
	of Theorem \ref{first-theo-1}. Then, for each $0<p\leqslant 2$ we have	
	\[
	\lim_{n\to\infty}	
	d_p(\boldsymbol{Z}_{n}^m(t),\boldsymbol{B}_m(t))
	=0
	\quad \text{and} \quad 
	\lim_{n\to\infty}
	\mathbb{E}\|\boldsymbol{Z}_{n}^m(t)\|^p=\mathbb{E}\|\boldsymbol{B}_m(t)\|^p,
	\]
	where $\boldsymbol{B}_m(t)=(B_t^1,\ldots, B_t^m)$ is the
	$m$-dimensional standard Brownian motion.
\end{theorem}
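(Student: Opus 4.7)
The plan is to reduce this $m$-dimensional Wasserstein convergence to $m$ copies of Corollary \ref{coro-new-1} by constructing a product coupling. The crucial structural observation is that the coordinate processes $\{X_i^j\}_{i\in\mathbb{N}}$ are independent across $j$ by hypothesis, so the random vector $\boldsymbol{Z}_n^m(t)$ has mutually independent components; the limit $\boldsymbol{B}_m(t)$ also has independent components by definition of the $m$-dimensional standard Brownian motion.

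First I would invoke Corollary \ref{coro-new-1} on each one-dimensional process $\{X_i^j\}_{i\in\mathbb{N}}$ to obtain $\lim_{n\to\infty} d_2(Z_t^j(n), B_t^j) = 0$ for every $j=1,\ldots,m$. Next, by Theorem \ref{theo-rep-geral}, for each $j$ and $n$ there exists an optimal coupling $\pi_n^j$ on $\mathbb{R}\times\mathbb{R}$ whose marginals are the laws of $Z_t^j(n)$ and $B_t^j$ and which attains the infimum in $d_2$. I would then define the product $\pi_n \coloneqq \pi_n^1 \otimes \cdots \otimes \pi_n^m$ on $\mathbb{R}^m \times \mathbb{R}^m$; coordinate-wise independence on both sides ensures that the marginals of $\pi_n$ are precisely the joint laws of $\boldsymbol{Z}_n^m(t)$ and $\boldsymbol{B}_m(t)$, so $\pi_n$ is an admissible (though generally suboptimal) coupling of the two random vectors.

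Using the identity $\|x-y\|^2 = \sum_{j=1}^{m} (x_j - y_j)^2$ for the Euclidean norm, this coupling yields the clean bound
\[
d_2(\boldsymbol{Z}_n^m(t), \boldsymbol{B}_m(t))^2 \leqslant \int_{\mathbb{R}^m\times\mathbb{R}^m}\|x-y\|^2\, d\pi_n(x,y) = \sum_{j=1}^{m} d_2(Z_t^j(n), B_t^j)^2,
\]
which tends to zero by the one-dimensional result. For $0 < p \leqslant 2$, Lyapunov's inequality applied to the same coupling gives $d_p \leqslant d_2$, so $d_p(\boldsymbol{Z}_n^m(t), \boldsymbol{B}_m(t)) \to 0$. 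For the moment convergence, Theorem \ref{BickelFriedmantheorem} with $p=2$ yields $\mathbb{E}\|\boldsymbol{Z}_n^m(t)\|^2 \to \mathbb{E}\|\boldsymbol{B}_m(t)\|^2$ directly from the $d_2$-convergence, and then for $0 < p < 2$ the convergence $\mathbb{E}\|\boldsymbol{Z}_n^m(t)\|^p \to \mathbb{E}\|\boldsymbol{B}_m(t)\|^p$ follows by combining weak convergence with the uniform integrability of $\{\|\boldsymbol{Z}_n^m(t)\|^p\}_n$, which is guaranteed by the uniformly bounded second moments.

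I do not anticipate any substantial obstacle: all the heavy machinery (the CLT for positively associated block sums, the existence of optimal couplings, and the Bickel--Freedman equivalence) is already available. The only step demanding care is verifying that the product coupling $\pi_n$ has the correct marginals, which is immediate from the independence of the coordinate processes; the rest is essentially bookkeeping and a one-line Jensen argument to drop from $p=2$ to general $p \in (0,2]$.
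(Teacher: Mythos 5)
Your proposal is correct and follows essentially the same route as the paper: both reduce to the one-dimensional result of Corollary \ref{coro-new-1} via a coordinate-wise optimal coupling, which the independence of the coordinate processes allows you to assemble into an admissible coupling of $\boldsymbol{Z}_{n}^m(t)$ and $\boldsymbol{B}_m(t)$, and then pass to $0<p<2$ by Lyapunov's inequality. Your version is in fact slightly more careful, since you make the product-coupling construction and its marginals explicit where the paper only applies Minkowski's inequality to an implicitly coupled pair.
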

\begin{proof} In view of Theorem \ref{teorema de representacao Mallows}, there exists an r.v.
	$B^{j,*}_{t}\stackrel{d}{=}B^j_{t}$ such that the joint distribution of 
	$(Z^j_{t}(n),B^{j,*}_{t})$ is given by $H(x,y)=\min\{F_{Z^j_{t}(n)}(x),F_{B^{j,*}_{t}}(y)\}$ and
	\[
	d_2(Z^j_{t}(n),B^{j,*}_{t})
	=
	\mathbb{E}(Z^j_{t}(n) - B^{j,*}_{t})^2,
	\quad j=1,\ldots,m.
	\]
	By the above identity and by Minkowski's inequality,
	\begin{align}\label{ref-imp}	
		d_2(\boldsymbol{Z}_{n}^m(t), \boldsymbol{B}_m(t))
		&\ \leqslant \
		\big\{\mathbb{E}\big\|\boldsymbol{Z}_{n}^m(t) - \boldsymbol{B}_m(t) \big\|^2 \big\}^{1/2}
		\nonumber	
		\\[0,2cm]
		&\ \leqslant \
		\sum_{j=1}^{m}\big\{\mathbb{E}(Z^j_{t}(n) - B_t^j )^2\big\}^{1/2}
		=	
		\sum_{j=1}^{m} d_2(Z^j_{t}(n),B_t^j).
	\end{align}
	Combining this inequality with \eqref{conv-2}, we obtain the convergence of order
	$p=2$.
	
	As in the proof of Theorem \ref{first-theo-1}, 
	the Lyapunov inequality completes the proof for $0 <p < 2$.	
\end{proof}
\begin{theorem}\label{corollary1}
	Under the conditions of Theorem \ref{second-theo}, consider the random process	
	\[	
	{W}_{n,m}(t)\coloneqq {1\over\sigma\sqrt{2n}}
	\left\{\boldsymbol{S}_{k}^m+(nt-k) (\boldsymbol{S}_{k+1}^m-\boldsymbol{S}_{k}^m)\right\},	
	\quad 
	{k\over n}\leqslant t<{k+1\over n},	\ t\in[0,1].
	\]
	Then, for each $0<p\leqslant 2$ we have	
	\[	
	\lim_{n\to\infty}
	d_p({W}_{n,m}(t),\boldsymbol{B}_m(t))= 0
	\quad \text{and} \quad 
	\lim_{n\to\infty}
	\mathbb{E}\|{W}_{n,m}(t)\|^p=\mathbb{E}\|\boldsymbol{B}_m(t)\|^p,
	\]
	where $\boldsymbol{B}_m(t)=(B_t^1,\ldots, B_t^m)$ is the
	$m$-dimensional standard Brownian motion.
\end{theorem}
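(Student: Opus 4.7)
The approach is to view $W_{n,m}(t)$ as an asymptotically vanishing perturbation of the rescaled piecewise-constant partial-sum process already treated in Theorem \ref{second-theo}, and to then invoke the triangle inequality for $d_p$. For fixed $t\in[0,1]$ and $k=\lfloor nt\rfloor$, I would decompose
\[
W_{n,m}(t)
\;=\;
\frac{\boldsymbol{S}_{k}^m}{\sigma\sqrt{2n}}
\;+\;R_n(t),
\qquad
R_n(t):=\frac{nt-k}{\sigma\sqrt{2n}}\bigl(\boldsymbol{S}_{k+1}^m-\boldsymbol{S}_{k}^m\bigr),
\]
so that the first summand is a rescaling of $\boldsymbol{Z}_n^m(t)$ from \eqref{norm-sum-1} and $R_n(t)$ captures the linear-interpolation correction.

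Since $nt-k<1$ and the increment $\boldsymbol{S}_{k+1}^m-\boldsymbol{S}_{k}^m = (X^1_{k+1},\ldots,X^m_{k+1})$ has uniformly bounded $L^2$-norm by Lyapunov applied to hypothesis \eqref{h3} of Theorem \ref{first-theo-1}, Minkowski's inequality yields
\[
\bigl\{\mathbb{E}\|R_n(t)\|^2\bigr\}^{1/2}
\;\leqslant\;\frac{1}{\sigma\sqrt{2n}}\Bigl\{\sum_{j=1}^m\mathbb{E}|X^j_{k+1}|^2\Bigr\}^{1/2}
\;=\;O(n^{-1/2}).
\]
By the coupling representation \eqref{Mallowsdistance}, this bounds $d_2\bigl(W_{n,m}(t),\,\boldsymbol{S}_{k}^m/(\sigma\sqrt{2n})\bigr)$, and the triangle inequality for $d_2$ together with Theorem \ref{second-theo} applied to the rescaled partial sums (absorbing the fixed factor $\sqrt{2}$ into the $\sigma$-normalization, exactly as the factor $\sqrt{\lfloor nt\rfloor/n}\to\sqrt{t}$ was absorbed in the proof of Corollary \ref{coro-new-1}) yields $d_2(W_{n,m}(t),\boldsymbol{B}_m(t))\to 0$.

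For the remaining range $0<p<2$, I would invoke the Lyapunov inequality $d_p\leqslant d_2$, exactly as at the end of the proof of Theorem \ref{second-theo}. The companion moment convergence $\mathbb{E}\|W_{n,m}(t)\|^p\to\mathbb{E}\|\boldsymbol{B}_m(t)\|^p$ is then an automatic consequence via Theorem \ref{BickelFriedmantheorem}. The main obstacle I anticipate is notational rather than probabilistic: carefully tracking the relationship between the $\sqrt{2n}$ normalization in the present statement and the $\sqrt{n}$ normalization of Theorem \ref{second-theo}, which amounts to propagating a fixed scale factor through the $\sigma$-normalization.
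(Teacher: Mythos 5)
Your proposal is correct and follows essentially the same route as the paper: the paper likewise splits $W_{n,m}(t)$ into the rescaled partial-sum process $\boldsymbol{Z}_{n}^m(t)$ plus the linear-interpolation remainder $\boldsymbol{\psi}_n^m(t)$, shows $\mathbb{E}\|\boldsymbol{\psi}_n^m(t)\|^2\to 0$ (via the variances of the $X^j_{k+1}$, equivalent to your use of \eqref{h3} and Lyapunov), and concludes by Minkowski's inequality, Corollary \ref{coro-new-1} and the bound $d_p\leqslant d_2$. The only delicate point, the bookkeeping of the extra factor $1/\sqrt{2}$ in the normalization, is one you flag explicitly while the paper handles it only implicitly.
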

\begin{proof}	
	The inequality ${k\over n}\leqslant t<{k+1\over n}$ implies that $k=\lfloor nt\rfloor$. 
	Then,	
	\begin{align*}
		{W}_{n,m}(t)= 
		{1\over\sqrt{2}}	
		\left\{
		\boldsymbol{Z}_{n}^m(t)
		+
		\frac{(nt-{\lfloor nt\rfloor})}{\sigma\sqrt{n}}
		(\boldsymbol{S}_{\lfloor nt\rfloor+1}^m-\boldsymbol{S}_{\lfloor nt\rfloor}^m)
		\right\}
		\eqqcolon	
		{1\over\sqrt{2}}	
		\left\{
		\boldsymbol{Z}_{n}^m(t)
		+
		\boldsymbol{\psi}^m_n(t)
		\right\}.
	\end{align*}	
	Note that
	\begin{align}\label{ig-esp}
		\big\|\boldsymbol{\psi}_n^m(t) \big\|^2
		=
		\frac{(nt-{\lfloor nt\rfloor})}{\sigma\sqrt{n}}
		\sum_{j=1}^{m} (X_{k+1}^j)^2.
	\end{align}
	By the above decomposition of ${W}_{n,m}(t)$ and Minkowski's inequality,
	\begin{align*}	
		d_2({W}_{n,m}(t),\boldsymbol{B}_m(t))
		&\leqslant
		{1\over \sqrt{2}}
		\big\{
		\mathbb{E}\|{W}_{n,m}(t)-\boldsymbol{B}_m(t) \|^2
		\big\}^{1/2}
		\\[0,2cm]	
		&\leqslant	
		{1\over \sqrt{2}}
		\big\{
		\mathbb{E}\|\boldsymbol{Z}_{n}^m(t) -\boldsymbol{B}_m(t)\|^2
		\big\}^{1/2}
		+
		{1\over \sqrt{2}}
		\big\{
		\mathbb{E}
		\|\boldsymbol{\psi}_n^m(t) \|^2
		\big\}^{1/2}
		\\[0,2cm]
		&\leqslant
		{1\over \sqrt{2}}
		\sum_{j=1}^{m} d_2(Z^j_{t}(n),B_t^j)
		+
		{1\over \sqrt{2}}
		\left\{
		\frac{(nt-{\lfloor nt\rfloor})}{\sigma\sqrt{n}}
		\sum_{j=1}^{m} \text{Var}(X_{k+1}^j)
		\right\}^{1/2},
	\end{align*}
	where in the last inequality we used \eqref{ref-imp} and \eqref{ig-esp}.
	Taking $n\to\infty$ in the above inequality and
	using Corollary \ref{coro-new-1}, we get
	$\lim_{n\to\infty}d_2({W}_{n,m}(t),\boldsymbol{B}_m(t))=0$.				
\end{proof}
The main result of this paper is an application
of the results of this section for a two-dimensional random
polymer model presented in the next section.

\section{Proof of Theorem \ref{Donsker invariance principle}}\label{sec-proof}
In this section,
we consider the random polymer model on $\mathbb{Z}^2$ given by the 
Gibbs measure \eqref{model}-\eqref{hamiltonian}, where the interaction $V_{ij}=|i-j|^{-\alpha}$,
and $1<\alpha\leqslant 2$. 

It is very well-known (see Dyson \cite{MR0436850}) 
that below the critical temperature 
the one-dimensional Ising 
model with such interactions exhibits a first-order phase transition in the magnetic field and that
at the critical temperature 
there is either a second-order or a mixed first-order-second-order 
(Thouless effect) transition (see Aizenman et al. \cite{Aizenman1988}).
We will prove that below this critical value $\beta_c$ (respectively, above the critical inverse temperature),
the random polymer model, after a rescaling, converges (respectively, does not converge)  
in Wasserstein distance to the planar standard Brownian motion,
leaving open the problem of asymptotic behavior of the random polymers in the critical phase. 

\subsection{Proof of Item 1.}
\noindent
Let $T:\mathbb{R}^2\to\mathbb{R}^2$ be the rotation through an angle $\pi/4$.
If $\mathcal{S}=(\mathcal{S}_1,\ldots,\mathcal{S}_n)\in\mathbb{W}_n$ and 
$\mathcal{X}_i=\mathcal{S}_{i+1}-\mathcal{S}_i$, then for each step $\mathcal{X}_i$	
there are $\sigma_i^{(1)},\sigma_i^{(2)}\in \{-1,1\}$ so that	
\begin{align}\label{definicao-sigmas-via-T}
	T\mathcal{X}_i = \sigma_i^{(1)} \frac{e_1}{\sqrt{2}}+\sigma_i^{(2)}\frac{e_2}{\sqrt{2}}.
\end{align}

Let $\mu^\beta$ be the thermodynamical limit of $\mu^\beta_\Lambda$ when $\Lambda\to\mathbb{N}$, 
and $\chi(\beta)$ the susceptibility of the Ising model at the inverse temperature $\beta$
defined in \eqref{susceptibility}.
Since $V_{ij}\geqslant 0$, it follows from the FKG inequality \cite{MR0309498} that 
$\chi(\beta)\geqslant 0$ and that the sequence of r.v.'s 
$\{\sigma_i^{(j)}:i\in\mathbb{N}\}$, $j=1,2$, are associated with respect to the 
infinite-volume Gibbs measure $\mu^\beta$.

Let $S_k^j=\sum_{i=1}^{k}X_i^j$ with $X_i^j\coloneqq\sigma_i^{(j)}$, $j=1,2$.
From \eqref{definicao-sigmas-via-T} follows that $T\mathcal{W}_n(t)$ 
has the following expression 
\begin{align*}
	T\mathcal{W}_n(t)
	&= 
	{1\over\sqrt{2n\chi(\beta)}}
	\sum_{j=1}^{2}
	\left(S_{\lfloor nt\rfloor}^j+(nt-\lfloor nt\rfloor) 
	(S_{\lfloor nt\rfloor +1}^j-S_{\lfloor nt\rfloor}^j) \right) e_j
	\\[0,2cm]
	&= 
	{W}_{n,2}(t),
\end{align*}
where ${W}_{n,2}(t)$ is the random process defined in Theorem \ref{corollary1}
with $m=2$ and $e_1, e_2$ are elements of the canonical basis of $\mathbb{R}^2$.
From Aizenman-Barsky-Fern\'andez's Theorem 
(Aizenman et al. \cite{MR894398}, see also Duminil-Copin and Tassion \cite{MR3477351}) 
follows that the susceptibility is finite as long as $\beta<\beta_c$. 
Let $\mathscr{G}(\beta)$ be the set of all infinite-volume
Gibbs measures. It is well-known that 
the set $\mathscr{G}(\beta)=\{\mu^\beta\}$
is a singleton for any $\beta<\beta_c$. 
In this case, this unique measure $\mu^\beta$ is 
translation invariant and therefore the r.v's 
$\{X_i^{j}:i\in\mathbb{N}\}$, $j=1,2$, form a
stationary sequence. Since these r.v.'s are uniformly bounded, they
have finite third moment and so \eqref{h3} is valid.
The conditions \eqref{h1} and \eqref{h2} of Theorem \ref{first-theo-1}
are proved analogously to Example \ref{ex-1}, so 
all the hypotheses of Theorem \ref{corollary1} hold. Therefore,
\begin{align}\label{con-zero-m}
	\lim_{n\to\infty}
	d_p({W}_{n,2}(t),\boldsymbol{B}_2(t))= 0.
\end{align}

On the other hand,
since $T$ is invertible and its inverse is a
linear transformation, denoted by $T^{-1}$, we have $\mathcal{W}_n(t)=T^{-1}{W}_{n,2}(t)$.
Combining this equality with Item (8.3) of reference Mallows \cite{MR0298812} and with the fact that the
distribution of the Brownian motion is invariant under rotations
in the plane, we obtain
\[
d_2(\mathcal{W}_n(t),\boldsymbol{B}_2(t))
=
d_2\big(T^{-1}{W}_{n,2}(t), \boldsymbol{B}_2(t) \big)
\leqslant
\|T^{-1}\| \cdot d_2\big({W}_{n,2}(t), \boldsymbol{B}_2(t) \big).
\]
By taking $n\to\infty$, in the above inequality, we get from \eqref{con-zero-m} that
$d_2(\mathcal{W}_n(t),\boldsymbol{B}_2(t))\to 0$.

Again, by applying the Lyapunov's inequality 
we complete the proof for $0 <p < 2$. \qed

\subsection{Proof of Item 2.}
\noindent
By Theorem \ref{BickelFriedmantheorem} it is sufficient to prove that $\mathcal{W}_n(t)\stackrel{\mathscr{D}}{\nrightarrow} \boldsymbol{B}_2(t)$ for each $\beta>\beta_c$.
The proof is by contradiction. Suppose that 
$\mathcal{W}_n(t)\stackrel{\mathscr{D}}{\rightarrow} \boldsymbol{B}_2(t)$.
So for each $t\in\mathbb{R}$, we have
\begin{align}\label{prim1}
	\lim_{n\to\infty}\mathbb{E}\|\mathcal{W}_n(t)\|_1^2=\mathbb{E}\|\boldsymbol{B}_2(t)\|_1^2=2t.
\end{align}
From now, to simplify the notation we write $\mathcal{W}_n(t)$ as
\[
\mathcal{W}_n(t)
=
\left\{
{1\over \sigma\sqrt{n}}
\mathcal{S}_{\lfloor nt\rfloor}
+
{(nt-\lfloor nt\rfloor)\over \sigma\sqrt{n}}
(\mathcal{S}_{\lfloor nt\rfloor +1}-\mathcal{S}_{\lfloor nt\rfloor})
\right\}
\eqqcolon
{1\over \sigma\sqrt{n}}\mathcal{S}_{\lfloor nt\rfloor}
+
\psi_n(t).
\]
For any $t>0$, the mean square of $\mathcal{W}_n(t)$ is given by
\begin{align*}
	\mathbb{E}\|\mathcal{W}_n(t)\|_1^2
	&=
	{1\over n\sigma^2}\mathbb{E}\|\mathcal{S}_{\lfloor nt\rfloor}\|_1^2
	+
	{1\over \sigma\sqrt{n}}
	\mathbb{E}\langle \mathcal{S}_{\lfloor nt\rfloor}, \psi_n(t)\rangle + \mathbb{E}\|\psi_n(t)\|_1^2
	\\[0,2cm]
	&=
	{1\over n\sigma^2}\mathbb{E}\|\mathcal{S}_{\lfloor nt\rfloor}\|_1^2
	+
	{\sqrt{nt-\lfloor nt\rfloor}\over \sigma\sqrt{n}}
	\mathbb{E}
	\Bigg\langle{\mathcal{S}_{\lfloor nt\rfloor}\over\sqrt{\lfloor nt\rfloor}},\psi_n(t) \Bigg\rangle
	+
	\mathbb{E}\|\psi_n(t)\|_1^2
	\\[0,2cm]
	&\eqqcolon
	a_n(t)+b_n(t)+c_n(t).
\end{align*}
It is simple to check that 
$
\psi_n(t)=
{(nt-\lfloor nt\rfloor)\over \sigma\sqrt{n}}
(\mathcal{S}_{\lfloor nt\rfloor +1}-\mathcal{S}_{\lfloor nt\rfloor})
\stackrel{\mathscr{D}}{\rightarrow} 0
$
and $\lim_{n\to\infty}c_n(t)=0$. We also have $\lim_{n\to\infty}b_n(t)=0$, since
\[
{\mathcal{S}_{\lfloor nt\rfloor}\over \sqrt{\lfloor nt\rfloor}}
\stackrel{\mathscr{D}}{\rightarrow} \boldsymbol{B}_2(t)
\quad \text{and} \quad 
{\sqrt{nt-\lfloor nt\rfloor}\over \sigma\sqrt{n} }\to 0.
\]
Finally, applying the main theorem of Cioletti et al. \cite{MR3226841} on the thermodynamic 
limit, we get
\[
{1\over 2}m_*(\beta)^2
\leqslant
{1\over \lfloor nt\rfloor^2}\mathbb{E}\|\mathcal{S}_{\lfloor nt\rfloor}\|_1^2
\leqslant 1.
\]
This shows that for any $\varepsilon>0$ 
there exists $N_0$ so that, if $n\geqslant N_0$, then
\[
{\lfloor nt\rfloor^2 \over 2n\sigma^2}m_*(\beta)^2
\leqslant
a_n(t)
\leqslant
{\lfloor nt\rfloor^2 \over n\sigma^2}+\varepsilon,
\]
which implies $\lim_{n\to\infty}a_n(t)=+\infty$. Hence, for each $t>0$,
\[
\lim_{n\to\infty}\mathbb{E}\|\mathcal{W}_n(t)\|_1^2=+\infty,
\]
which contradicts \eqref{prim1}, thus finishing the proof. \qed


\section*{Acknowledgements}

This study was financed in part by the Coordena\c{c}\~{a}o de Aperfei\c{c}oamento de Pessoal de N\'{i}vel Superior - Brasil (CAPES) - Finance Code 001.
The authors are grateful to the anonymous referee for relevant comments and
suggestions.
It is a pleasure to thank Leandro Cioletti for many valuable comments and
careful reading of this manuscript.

%
%
%

\providecommand{\bysame}{\leavevmode\hbox to3em{\hrulefill}\thinspace}
\providecommand{\MR}{\relax\ifhmode\unskip\space\fi MR }
\providecommand{\MRhref}[2]{%
	\href{http://www.ams.org/mathscinet-getitem?mr=#1}{#2}
}
\providecommand{\href}[2]{#2}

\end{document}